\newtheorem{theorem}{Theorem}
\newtheorem{lemma}[theorem]{Lemma}
\newtheorem{definition}[theorem]{Definition}
\newtheorem{conjecture}[theorem]{Conjecture}
\newtheorem{hypothesis}[theorem]{Hypothesis}
\newtheorem{corollary}[theorem]{Corollary}
\newcommand{\p}{\mathsf{P}}
\newcommand{\q}{\mathsf{Q}}
\newcommand{\Cmax}{C_{\mathsf{max}}}
\newcommand{\Prec}{\mathsf{prec}}
\newcommand{\dup}{\mathsf{dup}}
\title{On the Hardness of Scheduling With Non-Uniform Communication Delays}
\author{Sami Davies\thanks{University of Washington, Seattle. Email: {\tt\{daviess,rothvoss,yihaoz93\}@uw.edu}. Sami Davies was an intern at MSR, Redmond when this work was done. Thomas Rothvoss is supported by NSF CAREER grant 1651861 and a David \& Lucile Packard Foundation Fellowship. } \qquad Janardhan Kulkarni\thanks{Microsoft Research, Redmond. Email: {\tt\{jakul,jatarnaw\}@microsoft.com}.} \qquad Thomas Rothvoss\footnotemark[1] \\ Sai Sandeep\thanks{Carnegie Mellon University, {\tt spallerl@andrew.cmu.edu}. Research supported in part by NSF grants CCF-1563742 and CCF-1908125.} \qquad 
Jakub Tarnawski\footnotemark[2] \qquad Yihao Zhang\footnotemark[1]}
\date{}
\begin{document}

\maketitle

\begin{abstract}
    In the scheduling with non-uniform communication delay problem, the input is a set of jobs with precedence constraints. Associated with every precedence constraint between a pair of jobs is a communication delay, the time duration the scheduler has to wait between the two jobs if they are scheduled on different machines. The objective is to assign the jobs to machines to minimize the makespan of the schedule. 
    Despite being a fundamental problem in theory and a consequential problem in practice, the approximability of scheduling problems with communication delays is not very well understood.
    One of the top ten open problems in scheduling theory, in the influential list by Schuurman and Woeginger and its latest update by Bansal, asks if the problem admits a constant factor approximation algorithm. In this paper, we answer the question in negative by proving that there is a logarithmic hardness for the problem under the standard complexity theory assumption that NP-complete problems do not admit quasi-polynomial time algorithms.
    
    Our hardness result is obtained using a surprisingly simple reduction from a problem that we call Unique Machine Precedence constraints Scheduling (UMPS). We believe that this problem is of central importance in understanding the hardness of many scheduling problems and conjecture that it is very hard to approximate. Among other things, our conjecture implies a logarithmic hardness of related machine scheduling with precedences, a long-standing open problem in scheduling theory and approximation algorithms. 
\end{abstract}

\section{Introduction}
\label{sec:intro}

We study the problem of scheduling jobs with precedence and non-uniform communication delay constraints on identical machines to minimize makespan objective function.
This classic model was first introduced by Rayward-Smith~\cite{RAYWARDSMITH1987} and Papadimitriou and Yannakakis~\cite{PapadimitriouY90}.
In this problem, we are given a set  $J$ of $n$ jobs, where each job $j$ has a processing length $p_j  \in \mathbb{Z}_+$.
The jobs need to be scheduled on $m$ identical machines.
The jobs have precedence and communication delay constraints, which are given by a partial order $\prec$. 
A constraint $j \prec j'$ encodes that job $j'$ can only start after job $j$ is completed.
Moreover, if $j \prec j'$ and $j$, $j'$ are scheduled on {different machines}, then $j'$ can only start executing at least $c_{jj'}$ time units after $j$ had finished.
On the other hand,  if $j$ and $j'$ are scheduled on the {same machine}, then $j'$ can start executing immediately after $j$ finishes. 
The goal is to schedule jobs {\em non-preemptively} to minimize makespan objective function, which is defined as the completion time of the last job.
In a non-preemptive schedule, each job $j$ needs to be assigned to a single machine $i$ and executed during a contiguous time interval of length $p_j$.
In the classical scheduling notation, the problem is denoted by $\p  \mid \Prec, c_{jk}\mid \Cmax$.\footnote{
We adopt the convention of \cite{GLLR79,VeltmanLL90}, where the respective fields denote:
	\textbf{(1) machine environment:} $Q$ for related machines, $P$ for identical machines,
	\textbf{(2) job properties:} $\Prec$ for precedence constraints; $c_{jk}$ for communication delays; $c$ when all the communication delays are equal to $c$; $p_j = 1$ for unit length case,
	\textbf{(3) objective:} $\Cmax$ for minimizing makespan.
}
A closely related problem is $\p \infty \mid \Prec, c_{jk}\mid \Cmax$, where the scheduler has access to as many machines as required.

Scheduling jobs with precedence and communication delays has been studied extensively over many years \cite{RAYWARDSMITH1987,PapadimitriouY90,MunierKonig,HanenMunier73Apx,ThurimellaYesha,HoogeveenLV94,PapadimitriouY90,GiroudeauKMP08}.
Furthermore, due to its relevance in data center scheduling problems and large scale training of ML models, there has been a renewed interest in more applied communities; we refer the readers to ~\cite{Chowdhury, guo2012spotting,hong2012finishing,shymyrbay2018meeting,zhang2012optimizing,zhao2015rapier,luo2016towards, narayanan2018pipedream, dean2017rlplacement,spotlight2018,jia2018beyond,tarnawski2020efficient}. 
However from a theoretical standpoint, besides NP-hardness results, very little was known in terms of the algorithms for the problem until the recent work by Maiti et al.~\cite{MRSSV} and Davies et al.~\cite{SchedulingWithCommunicationDelaysDKRTZ2020, DaviesKRTZ21}.
These very recent papers designed polylogarithmic approximation algorithms for the special case when all the communication delays are the {\em same}.
We survey these results in Section \ref{sec:relatedwork}.
In fact, the problems of scheduling jobs with communication delays are some of the well-known open questions in approximation algorithms and scheduling theory, which have resisted progress for a long time.  For this reason, the influential survey by  Schuurman and Woeginger~\cite{SW99a} and its recent update by Bansal~\cite{Bansalmapsp} list understanding the approximability of the problems in this space as one of the top-10 open questions in scheduling theory. 

In particular, an open problem in these surveys asks if the non-uniform communication delay problem on identical machines, even assuming an unbounded number of machines ($\p \infty \mid \Prec, c_{jk}\mid \Cmax$), admits a constant factor approximation algorithm.
The main result of the paper resolves this question.

\begin{restatable}{theorem}{main}
\label{thm:main}
For every constant $\epsilon >0$, assuming $\emph{\textsf{NP}}\nsubseteq \emph{\textsf{ZTIME}}\left(n^{(\log n)^{O(1)}}\right)$, the non-uniform communication delay problem $(\p \infty \mid \Prec,  c_{jk}\mid \Cmax)$ does not admit a polynomial time $(\log n)^{1-\epsilon}$ approximation algorithm. 
\end{restatable}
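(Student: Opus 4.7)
The natural strategy, foreshadowed by the abstract, is a two-step reduction: first establish a strong hardness result for the Unique Machine Precedence Scheduling (UMPS) problem, then reduce UMPS to $\p\infty \mid \Prec, c_{jk} \mid \Cmax$ in an approximation-preserving way. I will assume the natural formulation: each UMPS job comes with a designated machine, and the scheduler chooses an order on each machine respecting a precedence DAG, minimizing makespan.

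For the first step (reducing UMPS to the communication-delay problem), I would exploit the infinite machine budget of the target. Each UMPS job $j$ with designated machine $\mu(j)$ becomes a scheduling job of the same processing time, precedences are copied verbatim, and each precedence edge $j \prec k$ is assigned a communication delay $c_{jk}$ that is large when $\mu(j) \neq \mu(k)$ in the UMPS source. The intuition: if the target schedule achieves small makespan, precedence edges across differently-labeled jobs cannot be ``cut'' between distinct physical machines without paying a prohibitive delay, so jobs sharing a designated machine end up co-located, recovering a UMPS-feasible schedule. To lift constant-factor gap preservation to a multiplicative one, I would attach long ``broadcast'' chains of auxiliary jobs to each UMPS job so that any misassignment triggers a cascade of delays throughout the schedule.

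For the second step (hardness of UMPS), I would reduce from a multi-layered PCP source such as Layered Label Cover with a quasi-polynomially amplified gap, as in the $\Omega(\log n)$-hardness of Set Cover via Moshkovitz--Raz or Dinur--Steurer. Each vertex--label pair $(v,\ell)$ becomes a UMPS job whose designated machine encodes $\ell$, and each Label Cover edge is encoded by a precedence gadget that forces the labels of adjacent vertices to agree: an inconsistency manifests as a bottleneck that must traverse several machines, each contributing an additive delay to the makespan. A consistent labeling yields a short UMPS schedule; conversely, any short UMPS schedule essentially fixes one label per vertex, decoding into a good Label Cover solution.

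The main obstacle is extracting the full $(\log n)^{1-\epsilon}$ factor rather than a constant. This requires (a) starting from a source problem with near-logarithmic gap whose size blow-up is only quasi-polynomial, matching the hypothesis $\textsf{NP} \nsubseteq \textsf{ZTIME}(n^{(\log n)^{O(1)}})$; and (b) a gap-amplifying structure in the first-step reduction that preserves the gap multiplicatively, not additively. Calibrating processing times, communication delays, and gadget lengths so that all inequalities go through simultaneously---in particular, ensuring the ``cascade of delays'' is strong enough to exploit the full UMPS gap while keeping the overall instance polynomial---will be the principal technical work.
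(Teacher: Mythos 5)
Your high-level architecture (UMPS as the intermediate problem, then a reduction into the communication-delay problem) matches the paper, but both halves of your plan have problems, and the first one is a genuine soundness failure. In your UMPS $\to$ $\p\infty\mid\Prec,c_{jk}\mid\Cmax$ step you place large communication delays on precedence edges whose endpoints have \emph{different} designated machines. This does not enforce the one constraint that actually matters: that all jobs designated for the same UMPS machine $i$ be serialized. With an unbounded number of physical machines, nothing in your construction prevents the scheduler from spreading the jobs of $J(i)$ across many machines and running them in parallel. Concretely, take a UMPS instance with $n$ unit jobs all designated to machine $1$ and no precedences: its optimum is $n$, but your reduced instance has $n$ independent jobs and makespan $1$. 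The paper's gadget is different and is the crux of the proof: for each UMPS machine $i$ it adds a single dummy sink job $j^*_i$ with precedence edges \emph{from every job of $J(i)$ into $j^*_i$}, each carrying a huge delay $C_\infty$ (while the original precedence edges get delay $0$). Any low-makespan schedule must then co-locate all of $J(i)$ with $j^*_i$ on one physical machine, which serializes them and lets the schedule be mapped back to a UMPS schedule of the same makespan. Your ``broadcast chain'' amplification idea does not repair this, since the failure is not about the size of the penalty but about which pairs of jobs the penalty binds together.

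For the hardness of UMPS itself, you propose building a new reduction from layered Label Cover with gap amplification; the paper gets this step essentially for free by observing that UMPS \emph{generalizes job shop scheduling} (job shops is exactly UMPS where the precedence DAG is a disjoint union of chains) and invoking the Mastrolilli--Svensson $(\log n)^{1-\epsilon}$-hardness of job shops under the same assumption $\textsf{NP}\nsubseteq\textsf{ZTIME}(n^{(\log n)^{O(1)}})$. Your from-scratch PCP construction is not fleshed out enough to evaluate and would be a substantial project in its own right; it is also worth noting that the known job-shops hardness requires non-unit job lengths, which is why the theorem as proved only applies when jobs have lengths.
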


As $\p \infty \mid \Prec, c_{jk}\mid \Cmax$, the problem with an unbounded number of machines is a special case of $\p \mid \Prec, c_{jk}\mid \Cmax$, where the number of machines is specified, our theorem also implies same hardness for $\p \mid \Prec, c_{jk}\mid \Cmax$.

\subsection{Our Techniques}

Our hardness result is obtained using a reduction via a problem we call Unique Machines Precedence constraints Scheduling (UMPS). In this problem, there are  $m$ machines and $n$ jobs $j_1, j_2, \ldots, j_n$ with precedence relations among them. Each job $j_l$ has length $p(l)$ and can be scheduled only on a {\em unique} machine $M(l) \in [m]$. The objective is to schedule the jobs non-preemptively on the corresponding unique machines respecting the precedence relations to minimize the makespan objective function. Our proof of Theorem \ref{thm:main} proceeds via two steps:

\begin{enumerate}
    \item First we show a reduction from an instance $I$ of the UMPS problem to an instance $I'$ of the non-uniform communication delay problem. 
    The key observation is to make sure that the set of jobs $J(i)$ that need to be scheduled on machine $i$ in $I$ do not get scheduled on multiple machines in $I'$.
    This we achieve by introducing a dummy job $j^*_i$ and introducing precedence constraints from all jobs in $J(i)$ to $j^*_i$ and a very large communication delay. 
    This ensures that $J(i)$ and $j^*_i$ are scheduled on the same machine in $I'$, although this machine need not be $i$.
    Despite this, we show that any valid schedule of $I'$ can be mapped back to a feasible schedule of $I$, with almost the same makespan.
    Our reduction creates only two types of communication delays and works for the unit length case.
    
    \item Next we observe that the UMPS problem generalizes the classical job shops problem (see e.g.,\cite{lawler1993sequencing,LeightonMR94, MastrolilliS11}), whose approximation is well understood~\cite{ShmoysSW94, CzumajS00, GoldbergPSS01, FeigeS02}. The logarithmic hardness result of the acyclic job shops problem by Mastrolli and Svensson ~\cite{MastrolilliS11} implies a logarithmic hardness of the UMPS problem. We remark that the hardness result of ~\cite{MastrolilliS11} only works when jobs have lengths, and hence our Theorem \ref{thm:main} only applies to the setting where jobs have lengths.
\end{enumerate}

\begin{figure}
\centering
\label{fig:reduction}
\begin{tikzpicture}

\node(A)[draw, thick] at (2,-2) {Job shops};
\node(B)[draw, thick, text width=4.5cm] at (7,-2) {Unique Machine Precedence constraints Scheduling (UMPS)};
\node(C)[draw, thick, text width=4.5cm] at (13,-2) {Scheduling with non-uniform communication delay};

\draw [-{Latex[length=3mm,width=3mm]}, thick] (A) -- (B);
\draw[-{Latex[length=3mm,width=3mm]}, thick] (B) -- (C);
\end{tikzpicture}
\caption{Role of UMPS problem in our hardness reduction.}

\end{figure}
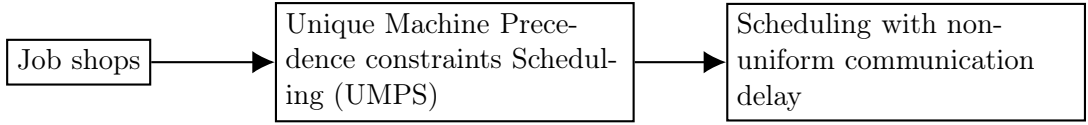

In hindsight, our proof of Theorem \ref{thm:main} is surprisingly simple. 
However, the main conceptual contribution of our proof is in identifying the UMPS problem as a central problem to study, which has implications for the hardness of various scheduling problems.
Furthermore, the UMPS problem, which can be viewed as a generalization of job shop scheduling model or as a highly restricted version of multiple dimensional scheduling problem with precedences, or as a restricted assignment problem with precedence constraints, is a fundamental problem to study on its own, both from a theoretical perspective and also from a practical point of view.
We believe the UMPS problem is a key intermediate step towards resolving several long-standing open problems in scheduling theory.
We make the following two conjectures regarding the approximability of the UMPS problem.

\begin{conjecture}
\label{conj:dag-main1}
There exists a constant $\epsilon < 1$ such that it is NP-hard to approximate UMPS containing $n$ jobs within a factor of $n^\epsilon$, even when all jobs have unit lengths. 
\end{conjecture}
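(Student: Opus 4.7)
The plan is to reduce from a problem with known polynomial hardness of approximation, most naturally a variant of Label Cover such as Min-Rep or a PCP-based gap problem arising from parallel repetition. Concretely, I would start with an instance of Label Cover on a bipartite graph $G = (U \cup V, E)$ with label alphabets $\Sigma_U, \Sigma_V$ and projection constraints $\pi_e$, where the gap is between full satisfiability and $1/n^{\delta}$-satisfiability for some constant $\delta > 0$.

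The UMPS construction would introduce one machine per vertex of $U \cup V$, one machine per edge, and a \emph{clock} machine $M_\star$ whose job sequence forms a long backbone that every valid schedule must route through. For each edge $(u,v) \in E$, I would attach a gadget of unit-length jobs: short chains that \emph{commit} machines $u$ and $v$ to specific labels, and that interface with the backbone via precedences. The gadgets should be designed so that if the backbone advances by one step on $M_\star$, the label commitments on the touching vertex machines must be simultaneously consistent with a satisfying assignment for that edge; otherwise, the backbone stalls for many time steps and the makespan grows. In a YES instance, a satisfying assignment yields a schedule in which all gadget chains fire in parallel and the makespan is dominated by the backbone length; in a NO instance, every labeling leaves many edges unsatisfied, and each unsatisfied edge would translate into a forced serialization on the unique machines it touches.

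The main obstacle, and the reason I believe this conjecture is open, lies in the soundness analysis with a polynomial gap. Unit-length jobs impose a strong combinatorial rigidity, but they also afford the scheduler great flexibility to interleave jobs freely and dodge bottlenecks; standard variable-length gap-amplification tricks do not carry over. Showing that the makespan in the NO case blows up by an $n^\epsilon$ factor \emph{purely} from structural DAG conflicts seems to require either a highly structured amplification inside the UMPS framework (a product construction that preserves the unique-machine constraint), or a reduction from a problem whose hardness is itself derived from long-code or smooth Label Cover, so that soundness translates into many \emph{independent} forced serializations along disjoint backbone segments and so that individual forced serializations aggregate multiplicatively rather than additively.

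As a preliminary and possibly more tractable step, I would first try to sharpen the $\Omega(\log n)$ hardness inherited from \cite{MastrolilliS11} to a super-logarithmic bound in the unit-length setting, by replacing their acyclic job-shop reduction with a layered construction that exploits the full flexibility of UMPS (multiple jobs per machine per time step, and arbitrary DAGs rather than per-job chains). A positive answer there, combined with a Feige--Kilian-style amplification tailored to the unique-machine constraint, would be my preferred route to a polynomial-gap hardness.
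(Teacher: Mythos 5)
This statement is \Cref{conj:dag-main1}, a \emph{conjecture}: the paper offers no proof of it and explicitly lists proving (or disproving) it as an open problem. There is therefore no ``paper's own proof'' to compare against. Your submission is, by your own account, a research plan rather than a proof, and it correctly identifies the crux --- establishing a polynomial-factor soundness gap against an adversarial scheduler of unit-length jobs --- without resolving it. So the honest verdict is that the proposal does not prove the statement, and nothing in the paper does either; the gap you name (aggregating forced serializations multiplicatively rather than additively in the NO case) is precisely why the statement remains a conjecture.

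One factual correction to your ``preliminary step'': the $(\log n)^{1-\epsilon}$ hardness of UMPS that the paper derives from Mastrolilli and Svensson's acyclic job-shop result (\Cref{thm:jobshops-hardness} and \Cref{thm:job-shops-DAG-weights}) holds only when jobs have \emph{lengths}; the paper explicitly remarks that this is why \Cref{thm:main} does not extend to the unit-length case. So in the unit-length setting there is no unconditional $\Omega(\log n)$ lower bound to ``sharpen'' --- even super-constant hardness is known only conditionally, under the $k$-partite partitioning hypothesis of Bazzi and Norouzi-Fard (\Cref{hyp:bazzi-fard}), via the reduction in \Cref{sec:conditional}. If you pursue this direction, the layered instances the paper describes (precedences only between consecutive machines' job sets), which are exactly the structure arising from that hypothesis, are the natural starting point; your Label Cover backbone gadget would need to subsume or strengthen that reduction, and your product-construction idea would need to preserve the unique-machine constraint under composition, which is the step no one currently knows how to do.
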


\begin{conjecture}
\label{conj:dag-main}
There exists an absolute constant $C\geq 1$ such that the following holds. For every constant $\epsilon>0$, it is NP-hard to approximate UMPS containing $n$ jobs within a $(\log n)^{1-\epsilon}$ factor, even when the number of machines $m$ is at most $(\log n)^{C}$ and all the jobs have unit lengths. 
\end{conjecture}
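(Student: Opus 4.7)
The plan is to build on the UMPS hardness derived from acyclic job shops (via Mastrolilli--Svensson) and push it into the unit-length, polylog-machines regime by starting from a stronger source of hardness. The natural candidate is a parallel-repetition version of Label Cover / MinRep of the Moshkovitz--Raz or Dinur--Harsha flavor: after $k$ rounds of repetition on an instance of size $N$, one gets a $(\log N)^{1-\epsilon}$ soundness gap while the alphabet remains $\mathrm{poly}(\log N)$. The idea is to let the machines in the UMPS instance correspond to the positions of the repeated Label Cover instance (one machine per variable, or per constraint side), so the polylogarithmic alphabet translates into $m \le (\log n)^{C}$ machines. The overall instance size $n$ is polynomial in the Label Cover instance, so $(\log n)^{1-\epsilon}$ soundness in terms of $n$ follows.

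The reduction proper would place on each machine a block of unit-length jobs encoding the possible labels of the associated variable, together with auxiliary "marker" jobs whose ordinal position in the machine's sequence decodes the chosen label. Precedence arcs across machines would implement a consistency check: for each constraint of Label Cover, a small DAG gadget of unit-length jobs would link the label-markers on the two involved machines in such a way that (i) in the YES case, a globally consistent labeling lets every machine place its "chosen" marker near the start, leaving only a short critical chain of length $L$, and (ii) in the NO case, any schedule violates a $(\log n)^{1-\epsilon}$ fraction of constraints, and each violation forces a roughly length-$L$ extra chain to thread through the DAG, blowing the makespan up by a $(\log n)^{1-\epsilon}$ factor. The power of UMPS over pure job shops — namely, a true DAG rather than per-job chains — is essential here, because $O(\log \log m)$-approximations of Goldberg--Paterson--Srinivasan--Sud for job shops with $m$ machines rule out achieving this conjecture inside the job shop regime.

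The main obstacle, and the reason this is stated as a conjecture rather than a theorem, is the soundness analysis in the unit-length setting. With non-unit lengths one can use weights to separate "intended" from "unintended" orderings, as Mastrolilli--Svensson do; with unit lengths every job contributes one time step, so the reduction cannot concentrate blame on a small number of heavy jobs. One needs a structural claim that any short schedule must place a particular marker job early on a $1 - o(1)$ fraction of machines, and that the induced labeling then satisfies a $\gg 1/(\log n)^{1-\epsilon}$ fraction of Label Cover constraints. Making this decoding lossless enough to match the parallel repetition soundness is the crux; I would expect to need Feige--Kilian-style amplification or a smoothed Label Cover to control decoding error, and carefully designed gadgets (perhaps based on bit-fixing or long-code-like arrangements of marker positions) so that the critical path in the precedence DAG faithfully tracks the number of violated constraints rather than being dominated by uninformative global chains.
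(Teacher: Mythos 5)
The statement you are asked about is not proved in the paper at all: it is stated as Conjecture~\ref{conj:dag-main} and left open. The paper's only unconditional hardness for UMPS (Corollary~\ref{thm:job-shops-DAG-weights}) comes from the Mastrolilli--Svensson job shop hardness and therefore requires non-unit job lengths and does not bound the number of machines by a polylogarithm; the paper explicitly flags the unit-length, few-machines regime as the key open problem. So there is no ``paper proof'' to match your argument against, and the relevant question is whether your sketch actually closes the gap. It does not.

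The genuine gap is the one you name yourself: the soundness analysis. Everything before that point is a plan, not an argument. You do not specify the gadget (what the ``marker'' jobs are, how a machine's ordering of its unit jobs decodes to a label, or how a constraint gadget forces a long chain upon violation), and without a concrete gadget there is no way to verify either direction of the reduction. The difficulty is structural, not merely technical: with unit lengths an adversarial schedule can interleave jobs on a machine arbitrarily, effectively committing to a \emph{distribution} over labels rather than a single label, and a list-decoding step is needed to extract a labeling that satisfies many Label Cover constraints. Controlling this in the presence of only $(\log n)^{C}$ machines --- each of which must therefore carry $n/(\log n)^{C}$ jobs and encode many variables at once --- is precisely what no known technique achieves; this is also why the paper's own route through the hypothesis of Bazzi and Norouzi-Fard only yields super-constant (not polylogarithmic) hardness, and only conditionally. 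Your observation that one must leave the job shop regime (since unit-length acyclic job shops admit good approximations relative to congestion plus dilation) is correct and is consistent with the paper's framing, but it identifies an obstacle rather than overcoming one. As written, the proposal should be read as a research program for attacking the conjecture, not as a proof of it.
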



Our second main contribution of the paper is to show that the above conjectures imply hardness results for various problems.
In particular, Conjecture \ref{conj:dag-main} implies logarithmic hardness for scheduling with precedences on related machines, another top-10 problem in scheduling theory \cite{SW99a, Bansalmapsp} and in the approximation algorithm book of Shmoys and Williamson \cite{theapproxbook}. 

\begin{restatable}{theorem}{related}
\label{thm:related}
Assuming~\Cref{conj:dag-main} and $\textsf{NP}\nsubseteq \textsf{DTIME}\left(n^{(\log n)^{O(1)}}\right)$, there exists an absolute constant $\gamma>0$ such that scheduling related machines with precedences problem $(\q \mid \Prec \mid \Cmax)$ has no polynomial time $\Omega\left( (\log m)^{\gamma}\right)$ factor approximation algorithm. 
\end{restatable}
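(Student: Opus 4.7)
The plan is to reduce UMPS (as in Conjecture~\ref{conj:dag-main}) to $\q \mid \Prec \mid \Cmax$ via a polynomial-time gap-preserving reduction. Given a UMPS instance with $n$ unit-length jobs and $m \leq (\log n)^{C}$ machines, I would build a $Q$-instance with $m' = m$ related machines whose speeds form a geometric progression $s_i = B^{i}$ for a large base $B$ polynomial in $n$. Each UMPS job $j$ with $M(j) = i$ becomes a $Q$-job of length $B^{i}$, and the precedence DAG is copied verbatim. With this encoding, a $Q$-job of length $B^{i}$ takes unit time on its ``correct'' machine $M_i$, time at least $B$ on any slower machine, and time at most $1/B$ on any faster machine. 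Choosing $B$ larger than the target makespan renders migration to slower machines infeasible and leaves only upward migration to control.

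Completeness is immediate: placing every $Q$-job on its correct machine yields a schedule of the same makespan as the UMPS input. For soundness, I would show that any $Q$-schedule of makespan $T'$ yields a UMPS schedule of makespan $O(T')$. Because the speed ratio $B$ dominates $T'$, every job sits on its correct machine or a faster one, so the only concern is systematic upward migration. To discourage this, I would first pre-process the UMPS instance by attaching to each machine a chain of $\Theta(T)$ padding unit-length jobs assigned to that machine; this grows the UMPS optimum by at most a constant factor (preserving the hardness gap) while imposing a baseline workload on every $Q$-machine that cannot be shed without violating the makespan bound. Via a load-counting argument exploiting the geometric speed gap---roughly, the total load that a faster machine $M_{i'}$ can absorb from slower-target jobs forms a convergent geometric series $T' \sum_{k \geq 1} B^{-k} = O(T'/B)$ relative to its natural load---one shows that essentially every job must remain on its correct machine, and the $Q$-schedule thereby exposes a UMPS schedule of makespan $O(T')$.

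With the reduction in hand, the hardness transfer is clean. Under Conjecture~\ref{conj:dag-main}, UMPS on $m \leq (\log n)^{C}$ machines admits no polynomial-time $(\log n)^{1-\epsilon}$-approximation, and the reduction produces $\q \mid \Prec \mid \Cmax$ instances with $m' = m$ machines inheriting the same inapproximability. Since $m' \leq (\log n)^{C}$ gives $\log n \geq (m')^{1/C}$, we obtain $(\log n)^{1-\epsilon} \geq (m')^{(1-\epsilon)/C}$, which already dominates $(\log m')^{\gamma}$ for any constant $\gamma > 0$. Any polynomial-time $\Omega((\log m')^{\gamma})$-approximation for $\q \mid \Prec \mid \Cmax$ would therefore contradict the conjecture (combined with the quasi-polynomial-time hypothesis on \textsf{NP}), yielding the theorem for any $\gamma \in (0, (1-\epsilon)/C)$.

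The main obstacle will be the soundness analysis, and specifically the accounting that shows the combination of padding and the geometric speed gap truly precludes nontrivial upward migration. Calibrating the padding length, the base $B$, and the target makespan $T$ so that this analysis closes cleanly---while also respecting the precedence structure inherited from UMPS, so that one can extract an honest UMPS schedule rather than a fractional or preemptive one---is where the bulk of the technical effort will go.
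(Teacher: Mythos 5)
There is a genuine gap in the soundness of your reduction, and it is not a calibration issue that more careful accounting would fix. With one machine per speed class and one $Q$-job per UMPS job, a single faster machine can absorb essentially \emph{all} of the slower-level work: a job of length $B^{i}$ runs in time $B^{i-i'}\le 1/B$ on the machine of speed $B^{i'}$ with $i'>i$, so in time $T'$ that machine processes up to $T'B^{i'-i}$ such jobs --- a huge number, not a vanishing one. Your geometric-series bound is computed from the wrong side: what converges is the \emph{time} the fast machine spends on slower jobs, which is exactly why it can take all of them. Concretely, on the hard (layered) UMPS instances the optimum is forced up to $\Theta(mn)$ by cross-machine precedence chains while each $|J(i)|=n$; in your $Q$-instance the adversary schedules every job on the single fastest machine, pays $n$ for the top-level jobs and a negligible amount for everything below, respects all precedences trivially (everything is sequential), and achieves makespan $\approx n$. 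The gap collapses. The padding chains do not help: a chain of $\Theta(T)$ unit jobs assigned to machine $i$ becomes a precedence chain of $Q$-jobs of length $B^{i}$, which also runs in time $\Theta(T)/B$ on a faster machine, so it imposes no baseline load there.

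The missing idea is \emph{replication}, which is how the paper (following Bazzi and Norouzi-Fard) makes your load-counting intuition actually true. Each UMPS job $j_l$ with $M(l)=i$ is blown up into $\kappa^{2(m-i)}$ identical copies of length $\kappa^{i-1}$, and speed class $i$ gets $\kappa^{2(m-i)}$ machines of speed $\kappa^{i-1}$ --- so faster classes have geometrically \emph{fewer} machines, and the aggregate capacity of all faster classes can absorb only an $O(nm/\kappa)$ fraction of the copies of any slower-level job. Slower classes are ruled out as you intended. This costs $M=\kappa^{O(m)}=n^{O(m)}$ machines, which is why the bound on $m\le(\log n)^{C}$ in Conjecture~\ref{conj:dag-main} matters and why the final hardness is $(\log M)^{\Omega(1)}$ rather than the much stronger bound your $m'=m$ accounting would suggest. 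The replication also creates the second substantial piece of work that your proposal does not anticipate: a $(1-\gamma)$-fraction of each job's copies sit on the correct machine class but are spread over time, so one only extracts a \emph{fractional} UMPS schedule, and the paper needs a swapping/earliest-deadline rearrangement argument to round it to an integral schedule at a factor-$2$ loss in makespan.
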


Previously, Bazzi and Norouzi-Fard~\cite{BazziN15} introduced a $k$-partite hypergraph partition problem whose hardness implies a super constant hardness of scheduling with precedences on related machines. Our reduction uses the same idea of job replication from the work of \cite{BazziN15}, while our soundness analysis is technically more involved. 
We also show that the hypothesis of~\cite{BazziN15} 
implies a super constant hardness of the UMPS problem. Thus, our problem can be viewed as a weaker version of the hypothesis of~\cite{BazziN15} with the same implication towards the hardness of related machines. Furthermore, stronger hardness of the UMPS problem implies better (almost optimal) hardness results for the related machines scheduling problem. 

Finally, we note that Conjecture \ref{conj:dag-main1} implies that precedence constrained scheduling (even with communication delay = 0) is very hard to approximate when generalized to restricted assignment setting or unrelated machines.

Our confidence in the above conjectures stems from the fact that existing techniques, both the classical jobshops algorithms~\cite{LeightonMR94} and the recent LP hierarchies based algorithms  \cite{MRSSV,SchedulingWithCommunicationDelaysDKRTZ2020} fail to give non-trivial approximation guarantees for the UMPS problem. Furthermore, a candidate hard instances for the problem are \textit{layered} instances, where there are precedences between jobs $j_1 \prec j_2$ only if $j_1$ can be scheduled on the machine $i$ and $j_2$ can be scheduled on the machine $i+1$. 
These layered instances are closely related to the $k$-partite partitioning hypothesis of~\cite{BazziN15} and the integrality gap instances~\cite{MRSSV} for the scheduling with uniform communication delay problem. 

\subsection{A Brief History of Communication Delay Problem. }
\label{sec:relatedwork}

In this subsection, we give a brief overview of the literature on the scheduling with communication delay problem. 

\paragraph{Scheduling with precedences.} The scheduling with precedences to minimize makespan $\p|\Prec|\Cmax$ is a classical combinatorial optimization problem and is a special case of the communication delay problem with $c = 0$ for all pairs of jobs. In one of the earliest results in the scheduling theory, Graham's list scheduling algorithm~\cite{GrahamListScheduling1966} is a $2$-factor approximation algorithm for the problem. Recently, Svensson~\cite{Svensson10} gave a matching hardness of approximation result assuming (a variant of) the Unique Games Conjecture~\cite{BansalK09}. When the number of machines is a constant, a series of recent works have obtained $1+\epsilon$ approximation in nearly quasi-polynomial time \cite{LeveyR16,Garg18, KulkarniLTY20, li2021towards}.

\paragraph{Uniform communication delay setting.} The problem becomes much harder with communication delays, even when all the communication delays between the jobs are equal. This problem is denoted by $\p \mid \Prec, c \mid \Cmax$ is referred to as scheduling with uniform communication delays. 
In this setting, Graham's list scheduling algorithm obtains a $(c+1)$-factor approximation. This was improved to $2/3 \cdot (c+1)$ factor by Giroudeau et al.~\cite{GiroudeauKMP08} when the jobs have unit lengths ($\p \infty \mid \Prec, p_j=1, c \ge 2 \mid \Cmax$).
In recent concurrent and independent works, poly-logarithmic factor approximation algorithms have been obtained for the uniform communication delays problem $\p \mid \Prec, c \mid \Cmax$ by Maiti et al.~\cite{MRSSV} and Davies et al.~\cite{SchedulingWithCommunicationDelaysDKRTZ2020, DaviesKRTZ21}. 

On the hardness front, when $c=1$, Hoogeveen, Lenstra and Veltman~\cite{HoogeveenLV94} showed that the problem $\p \infty \mid \Prec, p_j=1, c=1 \mid \Cmax$ is NP-hard to approximate to a factor  better than $7/6$.
The result has been generalized for $c \ge 2$ to $(1+1/(c+4))$-hardness~\cite{GiroudeauKMP08}.\footnote{
	Papadimitriou and Yannakakis~\cite{PapadimitriouY90} claim a $2$-hardness for
	$\p \infty \mid \Prec, p_j=1, c \mid \Cmax$,
	but give no proof.
	Schuurman and Woeginger~\cite{SW99a} remark that ``it would be nice to have a proof for this claim''.
}

\paragraph{Scheduling with non-uniform communication delay.} We do not know of any algorithm for non-uniform communication delays $\p \infty \mid \Prec, c_{jk} \mid \Cmax$ problem.  
On the hardness side, the best hardness is the above small constant hardness of the uniform communication delay setting. 
While our main result shows that logarithmic hardness for this problem, it is conceivable that it admits a polylog approximation algorithm, although our conjectures suggest otherwise.

\paragraph{Duplication model.}
Scheduling with communication delays problem has also been studied in the duplication model where we allow jobs to be duplicated (replicated), i.e., executed on more than one machine to avoid communication delays. In this easier model, for the general $\p \infty \mid \Prec, p_j, c_{jk}, \dup \mid \Cmax$ problem, there is a simple  $2$-factor approximation algorithm by Papadimitriou and Yannakakis~\cite{PapadimitriouY90}. 
On the other hand, \cite{PapadimitriouY90} also show the NP-hardness of
$\p \infty \mid \Prec, p_j=1, c, \dup \mid \Cmax$. 
Note that the $O(1)$ approximation algorithm for the problem with duplication is in sharp contrast to our hardness result (\Cref{thm:main}) illustrating that the problem is significantly harder without duplication. 

\subsection{Discussion and Open Problems}

While we make progress on the hardness of approximation for the scheduling with non-uniform communication delay problem, our main conceptual contribution of this work is initiating the formal study of the UMPS problem. When the jobs have lengths, the problem does not admit a polylogarithmic approximation. However, much less is known for the unit lengths case. We now mention a few open problems in this direction.

\begin{enumerate}
    \item The key open problem is to prove (or disprove)~\Cref{conj:dag-main}. A positive resolution of the conjecture would prove the hardness of scheduling related machines with precedences, a long-standing open problem in scheduling theory. By the same reduction as in the proof of~\Cref{thm:main},~\Cref{conj:dag-main} also implies a logarithmic hardness of approximation of non-uniform communication delay problem even when the jobs have unit lengths ($\p \infty \mid \Prec, p_j=1, c_{jk}\mid \Cmax$). 
    \item On the other hand, obtaining good approximation algorithms for the UMPS problem would be even more exciting. Is~\Cref{conj:dag-main1} true, or is there a $(\log n)^{O(1)}$ factor approximation algorithm for the problem when the jobs have unit length?
    
\end{enumerate}

\subsection{Organization}
The rest of the paper is organized as follows. We first formally define the UMPS problem and relate it to the jobshops problem in~\Cref{sec:job-dag}. We then use the hardness of the UMPS problem to prove~\Cref{thm:main} in~\Cref{sec:comm-delay}. Finally, in~\Cref{sec:conditional}, we show that ~\Cref{conj:dag-main} implies improved hardness of the related machine scheduling with precedences and that the hypothesis of~\cite{BazziN15} implies a super constant hardness of the UMPS problem with unit lengths. 

\section{Unique Machine Precedence Constraints Scheduling problem}
\label{sec:job-dag}
We first formally define the Unique Machine Precedence constraint Scheduling (UMPS) problem. 
\begin{definition}(Unique Machine Precedence constraint Scheduling)
In the Unique Machine Precedence constraint Scheduling (UMPS) problem, the input is a set of $m$ machines and $n$ jobs $j_1, j_2, \ldots, j_n$ with precedence relations among them. 
Furthermore, each job $j_l$ can be scheduled only on a fixed machine $M(l)\in [m]$, and takes $p(l)$ time to complete. The jobs should be scheduled non-preemptively i.e., once a machine starts processing a job $j_l$, it has to finish the job $j_l$ before processing other jobs.
The objective is to schedule the jobs on the corresponding machines in this non-preemptive manner respecting the precedence relations, to minimize the makespan. 
\end{definition}

We note that the UMPS problem is a generalization of the classical jobshops problem that we formally define below. 

\begin{definition}(Job shops)
In the jobshops problem, the input is a set of $n$ jobs to be processed on a set $M$ of $m$ machines. Each job $j$ consists of $\mu_j$ operations $O_{1,j}, O_{2,j}, \ldots, O_{\mu_j,j}$. Operation $O_{i,j}$ must be processed for $p_{i,j}$ units of time without interruptions on the machine $m_{i,j}\in M$, and can only be scheduled if all the preceding operations $O_{i',j}, i'<i$ have finished processing. The objective is to schedule all the operations on the corresponding machines to minimize the makespan. 
\end{definition}
Note that jobshops problem is a special case of the UMPS problem, corresponding to the case when the precedence DAG is a disjoint union of chains. 
The jobshops problem has received a lot of attention and played an important role in the development of key algorithmic techniques~\cite{lawler1993sequencing, LeightonMR94}. On the hardness front, Mastrolilli and Svensson showed almost optimal hardness results for the problem in a breakthrough result~\cite{MastrolilliS11}. 

\begin{theorem}
\label{thm:jobshops-hardness}
For every constant $\epsilon >0$, assuming $\emph{\textsf{NP}}\nsubseteq \emph{\textsf{ZTIME}}\left(n^{(\log n)^{O(1)}}\right)$, there is no polynomial time $(\log n)^{1-\epsilon}$ factor approximation algorithm for the jobshops problem, where $n$ is the total number of operations in the given jobshops instance. 
\end{theorem}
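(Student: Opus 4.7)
The result is attributed to Mastrolilli and Svensson~\cite{MastrolilliS11}, so the proof in the paper is presumably a citation. Nevertheless, let me sketch how one would reconstruct such a hardness result. The plan is to perform a PCP-style reduction from Label Cover (or a Min-Rep variant) to the jobshops problem, and then amplify the resulting constant gap to $(\log n)^{1-\epsilon}$ via a recursive/composition step, which is what forces the use of quasi-polynomial time in the assumption.

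First I would start from an instance of Label Cover with the bipartite structure $(U, V, E)$, label alphabets $L_U, L_V$, and projection constraints $\pi_e : L_U \to L_V$. By parallel repetition, we can assume a gap where either all edges are satisfiable or no labeling satisfies more than an $\eta$ fraction, with $\eta$ tunably small. The reduction then constructs, for each vertex $v$ and each candidate label $\ell$, a job (i.e., a chain of operations) whose sequence of machines encodes the ``commitments'' that assigning label $\ell$ to $v$ makes with respect to neighboring vertices and their projections. Machines thus represent either vertex-label pairs or edge-label ``consistency slots''; jobs from different vertices collide on a machine exactly when their label choices are inconsistent on a shared constraint.

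For completeness, a fully satisfying labeling selects one compatible chain per vertex, and these chains route through the machines with no conflicts beyond the trivial dilation/congestion given by a single traversal, producing a schedule of length $O(D)$ where $D$ is the maximum chain length. For soundness, if the optimal makespan is at most $\alpha \cdot D$, then an averaging argument shows that each machine can serve only $\alpha$ distinct jobs before its time budget is exhausted; interpreting these surviving jobs as a (possibly randomized) labeling, one argues that a $\Omega(1/\alpha)$ fraction of edges must be consistently labeled, contradicting the Label Cover gap for $\alpha$ sufficiently larger than $1/\eta$. The main obstacle is precisely this soundness analysis: extracting a labeling from a short schedule requires that the scheduler's local ``time-sharing'' of machines translate into globally consistent label assignments, which uses the projection/smoothness properties of the underlying PCP in a nontrivial way.

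Finally, to push the hardness from a constant to $(\log n)^{1-\epsilon}$, I would compose the reduction with itself $\Theta(\log \log n / \log(1/\epsilon))$ times, each level turning a gap of $g$ into a gap of roughly $g^{1+\Omega(1)}$ while blowing up instance size quasi-polynomially. This recursive amplification is the source of the $\textsf{ZTIME}(n^{(\log n)^{O(1)}})$ hypothesis in the statement, and it is the step where managing the blow-up in the number of operations and machines, while preserving both the completeness schedule and the congestion-based soundness lower bound across levels, is the most delicate part of the argument.
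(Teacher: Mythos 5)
You are right that the paper offers no proof of this theorem: it is imported verbatim from Mastrolilli and Svensson~\cite{MastrolilliS11}, and the correct ``proof'' here is the citation. On that level your answer is fine.

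Your attempted reconstruction, however, does not match how the result is actually proved, and its key step is unsubstantiated. Mastrolilli and Svensson do not run a Label Cover reduction followed by recursive self-composition of the scheduling gadget. Instead, they reduce from the hardness of graph coloring: given a graph on $n'$ vertices promised to have chromatic number at most $\log n'$, together with a coloring using $(\log n')^{\delta}$ colors, it is hard (under the stated $\textsf{ZTIME}$ assumption, which is inherited from the randomized reductions behind coloring hardness) to find an independent set of size $n'/(\log n')^{\delta-1}$. A \emph{single} reduction then turns the graph into a job shop instance with $d=(\log n')^{\delta}$ levels and $(n')^{O((\log n')^{\delta})}$ operations --- this parameter choice, not iterated composition, is the source of the quasi-polynomial blow-up --- such that a short schedule yields a large independent set, while a small chromatic number yields a short schedule. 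The $(\log n)^{1-\epsilon}$ gap is thus inherited entirely from the coloring gap by tuning $\delta=\Omega(1/\epsilon)$. By contrast, your proposal hinges on ``composing the reduction with itself'' so that a gap $g$ becomes $g^{1+\Omega(1)}$ per level; you give no mechanism for what it means to compose two job shop instances so that makespans multiply in both the completeness and soundness directions, and no such gap-amplification for job shops is known. As written, that step is the entire content of going from a constant gap to a logarithmic one, so the sketch has a genuine gap precisely where the difficulty lies. If you want to go beyond the citation, the honest route is to reproduce the coloring-based argument (as in the reduction of Section~3.1 of~\cite{MastrolilliS11}) rather than invent a composition scheme.
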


As a corollary, we obtain the following hardness result. 
\begin{corollary}
\label{thm:job-shops-DAG-weights}
For every constant $\epsilon >0$, assuming $\emph{\textsf{NP}}\nsubseteq \emph{\textsf{ZTIME}}\left(n^{(\log n)^{O(1)}}\right)$, there is no polynomial time $(\log n)^{1-\epsilon}$ factor approximation algorithm for the UMPS problem. 
\end{corollary}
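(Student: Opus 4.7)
The plan is to derive Corollary~\ref{thm:job-shops-DAG-weights} as an immediate consequence of Theorem~\ref{thm:jobshops-hardness} by exhibiting a trivial approximation-preserving reduction from jobshops to UMPS. The observation already noted right before the corollary is that jobshops is literally the restriction of UMPS in which the precedence DAG is a disjoint union of chains, so the reduction is essentially the identity map.

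More concretely, given a jobshops instance with machines $M$ and jobs $j$ composed of operations $O_{1,j},\ldots,O_{\mu_j,j}$ to be processed on machines $m_{i,j}\in M$ with lengths $p_{i,j}$, I would build a UMPS instance on the same machine set $M$ whose jobs are in bijection with the operations: for each operation $O_{i,j}$ introduce a UMPS job $\widetilde O_{i,j}$ with assigned machine $M(\widetilde O_{i,j})=m_{i,j}$ and length $p(\widetilde O_{i,j})=p_{i,j}$. The precedence relation of the UMPS instance consists of the chains $\widetilde O_{1,j}\prec \widetilde O_{2,j}\prec\cdots\prec \widetilde O_{\mu_j,j}$ for each original job $j$, and no other relations. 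Then feasible non-preemptive schedules of the UMPS instance are in one-to-one correspondence with feasible non-preemptive schedules of the jobshops instance (each $\widetilde O_{i,j}$ must be processed on $m_{i,j}$, exactly as the operation $O_{i,j}$ must), with identical makespans. In particular the optimal makespans coincide, and any $\alpha$-approximate schedule for the UMPS instance yields an $\alpha$-approximate schedule for the jobshops instance in polynomial time.

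The number of UMPS jobs produced equals the number $n$ of operations in the jobshops instance, so the parameter $n$ in the target hardness statement matches the parameter $n$ in Theorem~\ref{thm:jobshops-hardness} exactly, and the $(\log n)^{1-\epsilon}$ inapproximability factor transfers verbatim under the same complexity assumption $\textsf{NP}\nsubseteq\textsf{ZTIME}(n^{(\log n)^{O(1)}})$. There is no genuine obstacle here: the only thing to double-check is that the non-preemptive and machine-assignment conventions in the two problem definitions agree, which they do by construction, and that we have not inflated $n$ (we have not, since the reduction is one-to-one on operations/jobs). Thus the corollary follows immediately.
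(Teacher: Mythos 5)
Your proposal is correct and matches the paper's (implicit) argument exactly: the corollary is obtained by observing that job shops is the special case of UMPS in which the precedence DAG is a disjoint union of chains, so the identity reduction preserves feasible schedules, makespans, and the parameter $n$, and the hardness of Theorem~\ref{thm:jobshops-hardness} transfers verbatim.
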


\section{Hardness of Scheduling With Non-Uniform Communication Delay}
\label{sec:comm-delay}

We now give a reduction from the UMPS problem to the non-uniform communication delay problem, thereby proving the hardness of the non-uniform communication delay problem. 

\main*

\medskip \noindent \textbf{Reduction. }
Let $I$ be the instance of the UMPS DAG problem with $n$ jobs $j_1, j_2, \ldots, j_n$, and $m$ machines. Furthermore, each job $j_l$ has a processing time $p(l)$ and can be scheduled only on the machine $M(l) \in [m]$. For an index $i \in [m]$, let $J(i) \subseteq \{j_1, j_2, \ldots, j_n\}$ denote the set of jobs that can be scheduled on the machine $i$. 

Roughly speaking, our idea in the reduction is to output a non-uniform communication delay instance where we force the jobs in $J(i)$ to be scheduled on the same machine, for every $i \in [m]$. We achieve this by adding a set of $m$ dummy jobs $j^*_1, j^*_2, \ldots, j^*_m$ and adding precedences with very large communication delay from all the jobs in $J(i)$ to $j^*_i$ for every $i \in [m]$.
More formally, we define an instance $I'$ of the non-uniform communication delay problem as follows. First, we choose a large integer $C_{\infty} = n \sum_{l=1}^n p(l)$. There are $n+m$ jobs in $I'$: a set of $n$ jobs $j'_1, j'_2, \ldots, j'_n$ such that for each $l \in [n]$, the processing time of $j'_l$ is equal to $p(l)$, and a set $\{ j^*_1, j^*_2, \ldots, j^*_m\}$ of $m$ dummy jobs, each with processing time $1$. For every precedence relation $j_u \prec j_v$ in the original instance $I$, there is a precedence relation $j'_u \prec j'_v$ in $I'$ with communication delay $0$. Finally, for every $i$, and every job $j_l \in J(i)$, there is a precedence relation $j'_l \prec j^*_i$ with communication delay $C_{\infty}$. 

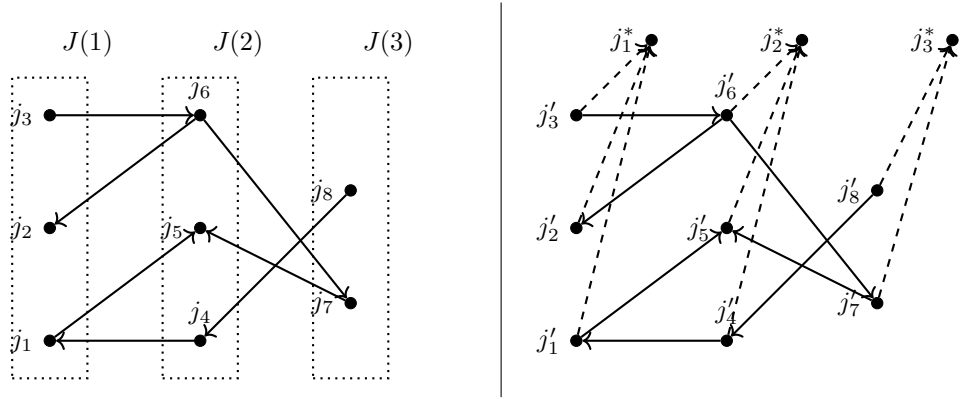
\begin{figure}
    \centering
    \begin{tikzpicture}
    
\draw (0,0.5)  node[fill,circle, draw, fill=black,inner sep=0,minimum size=0.15cm,label={180:\small $j_1$}](A) {};
\draw (0,2)  node[fill,circle, draw, fill=black,inner sep=0,minimum size=0.15cm,label={180:\small $j_2$}](B) {};
\draw (0,3.5)  node[fill,circle, draw, fill=black,inner sep=0,minimum size=0.15cm,label={180:\small $j_3$}](C) {};
\draw (2,0.5)  node[fill,circle, draw, fill=black,inner sep=0,minimum size=0.15cm,label={90:\small $j_4$}](D) {};
\draw (2,2)  node[fill,circle, draw, fill=black,inner sep=0,minimum size=0.15cm,label={180:\small $j_5$}](E) {};
\draw (2,3.5)  node[fill,circle, draw, fill=black,inner sep=0,minimum size=0.15cm,label={90:\small $j_6$}](F) {};
\draw (4,1)  node[fill,circle, draw, fill=black,inner sep=0,minimum size=0.15cm,label={180:\small $j_7$}](G) {};
\draw (4,2.5)  node[fill,circle, draw, fill=black,inner sep=0,minimum size=0.15cm,label={180:\small $j_8$}](H) {};
    
    \draw [->, thick] (D) -- (A);
    \draw [->, thick] (A) -- (E);
    \draw [->, thick] (F) -- (B);
    \draw [->, thick] (C) -- (F);
    \draw [->, thick] (F) -- (G);
    \draw [->, thick] (G) -- (E);
    \draw [->, thick] (H) -- (D);

\draw[dotted, thick] (-0.5,0) rectangle (0.5,4) node[label={\small $J(1)$}] {};
\draw[dotted, thick] (1.5,0) rectangle (2.5,4) node[label={\small $J(2)$}] {};
\draw[dotted, thick] (3.5,0) rectangle (4.5,4) node[label={\small $J(3)$}] {};

\draw (6,-0.25) -- (6,5);

\draw (7,0.5)  node[fill,circle, draw, fill=black,inner sep=0,minimum size=0.15cm,label={180:\small $j'_1$}](A1) {};
\draw (7,2)  node[fill,circle, draw, fill=black,inner sep=0,minimum size=0.15cm,label={180:\small $j'_2$}](B1) {};
\draw (7,3.5)  node[fill,circle, draw, fill=black,inner sep=0,minimum size=0.15cm,label={180:\small $j'_3$}](C1) {};
\draw (9,0.5)  node[fill,circle, draw, fill=black,inner sep=0,minimum size=0.15cm,label={90:\small $j'_4$}](D1) {};
\draw (9,2)  node[fill,circle, draw, fill=black,inner sep=0,minimum size=0.15cm,label={180:\small $j'_5$}](E1) {};
\draw (9,3.5)  node[fill,circle, draw, fill=black,inner sep=0,minimum size=0.15cm,label={90:\small $j'_6$}](F1) {};
\draw (11,1)  node[fill,circle, draw, fill=black,inner sep=0,minimum size=0.15cm,label={180:\small $j'_7$}](G1) {};
\draw (11,2.5)  node[fill,circle, draw, fill=black,inner sep=0,minimum size=0.15cm,label={180:\small $j'_8$}](H1) {};

\draw (8,4.5)  node[fill,circle, draw, fill=black,inner sep=0,minimum size=0.15cm,label={180:\small $j^*_1$}](L) {};
\draw (10,4.5)  node[fill,circle, draw, fill=black,inner sep=0,minimum size=0.15cm,label={180:\small $j^*_2$}](M) {};
\draw (12,4.5)  node[fill,circle, draw, fill=black,inner sep=0,minimum size=0.15cm,label={180:\small $j^*_3$}](N) {};

    \draw [->, thick] (D1) -- (A1);
    \draw [->, thick] (A1) -- (E1);
    \draw [->, thick] (F1) -- (B1);
    \draw [->, thick] (C1) -- (F1);
    \draw [->, thick] (F1) -- (G1);
    \draw [->, thick] (G1) -- (E1);
    \draw [->, thick] (H1) -- (D1);

    \draw [->,dashed, thick] (A1) -- (L);
    \draw [->,dashed, thick] (B1) -- (L);
    \draw [->,dashed, thick] (C1) -- (L);
    \draw [->,dashed, thick] (D1) -- (M);
    \draw [->,dashed, thick] (E1) -- (M);
    \draw [->,dashed, thick] (F1) -- (M);
    \draw [->,dashed, thick] (G1) -- (N);
    \draw [->,dashed, thick] (H1) -- (N);
    
    \end{tikzpicture}
    \caption{Illustration of the reduction from non-preemptive jobshops on DAG to non-uniform communication delays. In the communication delay instance on the right, the dashed arrow precedences have communication delay $C_{\infty}$ while the normal arrow precedences have communication delay $0$.}
    \label{fig:main-reduction}
\end{figure}

\medskip \noindent \textbf{Completeness.}
Suppose that there is a scheduling of $I$ with makespan at most $L$. Then, we claim that there is a scheduling of $I'$ with makespan at most $L+1$. 
We use $m$ machines and schedule the job $j'_l$ on the machine $M(l)$ in the same time slot used by the scheduling of $I$. 
As the communication delay of the precedences among the jobs $\{ j'_1, j'_2, \ldots, j'_n \}$ is zero, we can schedule these jobs using $m$ machines with makespan at most $L$. 
Now, after all the jobs $\{j'_1, j'_2, \ldots, j'_n\}$ have been scheduled, we schedule the job $j^*_i$ in the machine $i$, for every $i \in [m]$. As we are scheduling all the jobs in $J(i)$ and $j^*_i$ on the same machine for every $i \in [m]$, we incur no communication delay when we are scheduling the dummy jobs, and we can schedule all the dummy jobs $j^*_1, j^*_2, \ldots, j^*_m$ simultaneously in the time slot between $L$ and $L+1$. 

\medskip \noindent \textbf{Soundness.} Suppose that there is a scheduling of $I'$ with makespan at most $L$. Then, we claim that there is a scheduling of $I$ with makespan at most $L$ as well. 

Note that there is a trivial scheduling of $I$ where we schedule each job one by one after topologically sorting them, that has a makespan of $\sum_{j=1}^n p(j)$. Thus, henceforth, we assume that $L \leq \sum_{j=1}^n p(j)$. 
For an index $i \in [m], $ let $J'(i)$ be the subset of jobs in $I'$ whose corresponding jobs in $I$ are to be scheduled on the machine $i$. 
\[
J'(i) := \{ j'_l : M(l)=m \}
\]
We claim that in the scheduling of $I'$ with makespan at most $L$, for every $i \in [m]$, all the jobs in $J'(i)$ must be scheduled on the same machine. 
Suppose for the sake of contradiction that this is not the case. If there are jobs $j'_{l_1}$ and $j'_{l_2}$ such that $M(l_1)=M(l_2)=i$ are scheduled on different machines $i_1, i_2$ in the scheduling of $I'$, at least one of $j'_{l_1}$ and $j'_{l_2}$ is scheduled on a different machine from that of $j^*_i$. 
However, as there are precedence relations $j'_{l_1} \prec j^*_i$ and $j'_{l_2} \prec j^*_i$ with communication delay $C_{\infty}$, at least one of the precedence relations has to wait for the communication delay, and thus, the makespan is at least $C_{\infty}>L$, a contradiction.  

Thus, for every $i \in [m]$, all the jobs in $J'(i)$ are processed on the same machine in $I'$.
This implies that at any point of time, at most one job from $J'(i)$ is being processed, for every $i \in [m]$. Using this observation, we output a scheduling of $I$: for every job $j_l \in J(i)$, we schedule $j_l$ in the same time slot used by the job $j'_l$ in the scheduling of $I'$. By the above observation, every machine $i \in [m]$ is used at most once at any time point. Furthermore, as the scheduling of $I'$ respects the precedence conditions, the new scheduling of $I$ also respects the precedence conditions. 
Note that the makespan of this scheduling of $I$ is equal to $L$. 
This completes the proof that there exists a scheduling of $I$ with makespan at most $L$, if there exists a scheduling of $I'$ with makespan at most $L$. 

\medskip 
This completes the proof of ~\Cref{thm:main}. We remark that the same reduction also proves a $(\log n)^{1-\epsilon}$ factor inapproximability of the finite machine version $\p \mid \Prec, c_{jk} \mid \Cmax$ of the non-uniform communication delay problem. 
\section{Conditional Hardness of Scheduling With Precedence Constraints on Related Machines}
\label{sec:conditional}
In this section, we first prove that~\Cref{conj:dag-main} implies improved hardness of scheduling related machines with precedences. 

We begin by formally defining the scheduling related machines with precedences problem ($\q \mid \Prec \mid \Cmax$). 
\begin{definition}(Scheduling related machines with precedences)
In the scheduling related machines with precedences problem, the input is a set of $m$ machines $\mathcal{M}$ and a set of $n$ jobs $\mathcal{J}$ with precedences among them. Furthermore, each machine $i$ has speed $s_i \in \mathbb{Z}^+$, and each job $j$ has processing time $p_j \in \mathbb{Z}^+$, and scheduling the job $j$ on machine $i$ takes $\frac{p_j}{s_i}$ units of time. The objective is to schedule the jobs on the machines non-preemptively respecting the precedences constraints, to minimize the makespan.   
\end{definition}
An algorithm with $O(\log m)$ approximation guarantee for the problem was given independently by Chudak and Shmoys~\cite{UniformlyRelatedMachinesWithPrecedences-ChudakShmoys-JALG99}, and Chekuri and Bender~\cite{ChekuriB01}. On the hardness side, a hardness factor of $2$ follows from the identical machines setting~\cite{Svensson10}, assuming a variant of the Unique Games Conjecture. Furthermore, Bazzi and Norouzi-Fard~\cite{BazziN15} put forth a hypothesis on the hardness of a $k$-partite graph partitioning problem, which implies a super constant hardness of the scheduling related machines with precedences problem.

We now prove that ~\Cref{conj:dag-main} implies poly logarithmic hardness of scheduling related machines with precedences problem. 

\related*

\medskip \noindent \textbf{Reduction.} 
Our reduction is essentially the same reduction as in~\cite{BazziN15} where the authors obtained conditional hardness of the related machine scheduling with precedences problem assuming the hardness of a certain $k$-partite graph partitioning problem. However, our soundness analysis needs more technical work.

We start with an instance $I$ of the UMPS problem with $n$ unit sized jobs $j_1, j_2, \ldots,j_ n$ and $m$ machines, and every job $j_l$ can only be scheduled on the machine $M(l) \in [m]$. Furthermore, we let $J(i)\subseteq [n], i \in [m]$ denote the set of all the jobs that can be scheduled on the machine $i$. 

We now output an instance $I'$ of the related machine scheduling problem. 
We choose a parameter $\kappa=10n^3m$.
For every $l \in [n]$, we have a set $\mathcal{J}_{l}$ of $\kappa^{2(m-M(l))}$ jobs in $I'$. The processing time of each of these jobs is equal to $\kappa^{M(l)-1}$. For every $i \in [m]$, we have $\mathcal{M}_i$, a set of $\kappa^{2(m-i)}$ machines, each with speed $\kappa^{i-1}$. Furthermore, for every precedence constraint $j_u \prec j_v$ in $I$, we have $j'_{l_1} \prec j'_{l_2}$ for every $j'_{l_1} \in \mathcal{J}_u$ and $j'_{l_2} \in \mathcal{J}_v$. 

\medskip \noindent \textbf{Completeness.} Suppose that there is a scheduling of $I$ with makespan equal to $L$. Then, we claim that there is a scheduling of $I'$ with makespan at most $L$ as well. 
Note that all the jobs in $\mathcal{J}_l$ can be scheduled on the machines $\mathcal{M}_{M(l)}$ in unit time. 
We obtain a scheduling of $I'$ by assigning the jobs $\mathcal{J}_{l}$ to the machines $\mathcal{M}_{M(l)}$ in the time slot used in $I$ to schedule the job $j_l$. 
This scheduling of $I'$ is indeed a valid scheduling, and has a makespan of at most $L$.

\medskip \noindent \textbf{Soundness.} We prove the soundness part in the lemma below. 

\begin{lemma}
Suppose that there is a scheduling of $I'$ with makespan $L$. Then, we will show that there is a scheduling of $I$ with makespan at most $2L$. 
\end{lemma}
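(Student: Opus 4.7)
The plan is to extract, from the given makespan-$L$ schedule of $I'$, a feasible schedule of $I$ of makespan at most $2L + O(1)$; the additive slack is harmless for the asymptotic hardness claim. First I would reduce to the case $L \leq n$, since otherwise the trivial topological schedule of $I$ (all jobs unit-sized) already has makespan $n \leq L \leq 2L$. The key definition is, for every UMPS job $j_l$, the surrogate completion time $C_l$ equal to the finishing time in $I'$ of the latest-completing copy in $\mathcal{J}_l$. The $I$-schedule will then place $j_l$ on machine $M(l)$ in a unit slot inside the time interval $[2\lceil C_l\rceil - 2,\ 2\lceil C_l\rceil]$.

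Two structural facts, both relying on $\kappa = 10n^3 m \gg L$, drive the analysis. \textbf{(i)} Every copy in $\mathcal{J}_l$ with $M(l)=i$ is run on a machine of level at least $i$: a level-$i'$ machine with $i' < i$ would need $\kappa^{i - i'} \geq \kappa > L$ time for a single copy, exceeding the makespan. \textbf{(ii)} The total speed-weighted capacity of level-$\geq i$ machines over $[0,t]$ is $t \sum_{i' \geq i} \kappa^{2m-i'-1} \leq 2t\kappa^{2m-i-1}$, while each UMPS job $l \in J(i)$ contributes total work $|\mathcal{J}_l|\cdot \kappa^{i-1} = \kappa^{2m-i-1}$ in $I'$. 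Combined with (i), this yields: at most $2t$ indices $l \in J(i)$ can have all copies in $\mathcal{J}_l$ completed by time $t$. In particular, each bucket $\{l \in J(i) : \lceil C_l\rceil = t\}$ has size at most two, so two unit slots per integer time suffice on each machine $i$.

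For precedence preservation I would establish that $j_u \prec j_v$ in $I$ forces $C_v \geq C_u + 1$. Every copy of $j_v$ in $I'$ starts no earlier than $C_u$ by the precedences imposed in the reduction. A refinement of (ii) applied to levels $i' > M(v)$ bounds the fraction of $\mathcal{J}_v$ placed on faster-than-own-level machines by $2L/\kappa < 1$ (using $L \leq n$ and $\kappa = 10n^3 m$); hence at least one copy of $j_v$ runs on a level-$M(v)$ machine, where it takes exactly one time unit and completes at time $\geq C_u + 1$. Therefore $\lceil C_v \rceil \geq \lceil C_u\rceil + 1$, and in the $I$-schedule $j_u$ finishes by $2\lceil C_u\rceil$ while $j_v$ starts no earlier than $2\lceil C_v\rceil - 2 \geq 2\lceil C_u\rceil$; same-machine precedences are automatic because buckets on a fixed machine are ordered by $C_l$. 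The total makespan is at most $2\lceil L \rceil$.

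The main technical step is the capacity counting in (ii) together with its refinement to the per-job ``leakage'' bound of $2L/\kappa$, which is the sole place where the explicit value $\kappa = 10n^3m$ enters: it is what guarantees both that fact (i) is strict and that at least one on-level copy of each $j_v$ survives, so that $C_v \geq C_u + 1$ goes through. Once these estimates are in hand, the bucketing into pairs, the cross-machine precedence check, and the makespan accounting are routine bookkeeping.
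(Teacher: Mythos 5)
Your reduction to $L\le n$, your fact (i) (no copy of a level-$i$ job fits on a slower machine), and your leakage bound showing that at least one copy of each job runs on its own level (which gives $C_v\ge C_u+1$ across precedences) all match the paper and are sound. The gap is in the step from the cumulative capacity bound to the per-bucket bound. From ``at most $2t$ jobs of $J(i)$ have all copies completed by time $t$'' you conclude ``in particular, each bucket $\{l\in J(i):\lceil C_l\rceil = t\}$ has size at most two.'' This does not follow: a cumulative bound of $2t$ is consistent with $0$ jobs finishing by time $t-1$ and $\Theta(t)$ jobs all finishing inside $(t-1,t]$. Concretely, the $\kappa^{2(m-i)}$ machines of $\mathcal{M}_i$ can time-share among $t$ different jobs of $J(i)$, processing a $1/t$ fraction of each job's copies in every unit interval, so that all $t$ jobs complete their last copies during $(t-1,t]$; nothing in the instance forbids this. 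Your schedule would then have to cram $t$ UMPS jobs into two unit slots on machine $i$, which fails.

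This clustering phenomenon is exactly what the paper's proof spends most of its effort on: it passes to a fractional schedule $x_{l,t}$, repeatedly swaps mass so that each machine greedily completes the job with the earliest ending time first (and fills idle capacity), and then proves by induction that the partially-processed volume $P_{i,t}$ on machine $i$ grows only like $\gamma t$ with $\gamma = 1/(10n^2)$. Only after this reordering is it true that each machine touches at most two jobs per slot, whence doubling every slot gives makespan $2L$. Your proposal skips this reordering entirely, and the surrogate completion times $C_l$ of the \emph{unmodified} schedule do not have the spacing property you need. To repair your argument along your own lines you would have to either (a) re-sort the jobs of $J(i)$ by $C_l$ and reassign them to slots using the cumulative bound (the $k$-th job in sorted order has $\lceil C_l\rceil \ge k$, roughly), carefully re-verifying the cross-machine precedences for the reassigned times, or (b) adopt the paper's fractional swapping argument; either way the missing step is the substantive part of the lemma, not routine bookkeeping.
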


\begin{proof}
Note that there is a trivial scheduling of $I$ where we schedule jobs in a topological sort one by one, with makespan equal to $n$. Thus, henceforth, we assume that $L \leq n$. 

Let $\gamma = \frac{1}{10n^2}$. 
We claim that for every $l \in [n]$, at most $\gamma \kappa^{2(m-M(l))}$ jobs in $\mathcal{J}_l$ are processed by machines that do not belong to $\mathcal{M}_{M(l)}$ in the scheduling $I'$. The proof of this claim follows from Lemma 1 of~\cite{BazziN15}, and we present it here for the sake of completeness.
Fix an index $l \in [n]$, and for ease of notation, let $i =M(l)$.
First, as each job in $\mathcal{J}_l$ has length $\kappa^{i-1}$, and the processing speed of each machine in $\mathcal{M}_j, j <i$ is at most $\kappa^{j-1} \leq \kappa^{i-2}$, no job in $\mathcal{J}_l$ is scheduled on machines in $\mathcal{M}_j$, $j <i$, as the makespan of $I'$ is at most $n < \kappa$. 
Now, consider an integer $j \in [m]$, $j >i$. There are $\kappa^{2(m-j)}$ machines in $\mathcal{M}_j$, and they have a processing speed of $\kappa^{j-1}$. Thus, in time $L \leq n$, they can process at most 
\[
\frac{n \cdot \kappa^{2(m-j)} \cdot \kappa^{j-1}}{\kappa^{i-1}} \leq \frac{n}{\kappa} \cdot \kappa^{2(m-i)}
\]
jobs of $\mathcal{J}_l$. Taking union over all $j >i$, we get that at most \[
\frac{nm}{\kappa} \cdot \kappa^{2(m-i)} \leq \frac{1}{10n^2} \kappa^{2(m-i)}
\]
jobs in $\mathcal{J}_l$ are processed by machines outside $\mathcal{M}_i$. 
In other words, for every job $j_l$ of $I$, at most $\gamma$ fraction of the jobs in $\mathcal{J}_l$ are processed by machines outside $\mathcal{M}_{M(l)}$. 

Now, consider a scheduling of the jobs in $I'$ where for every $l\in [n]$, we get rid of the jobs in $\mathcal{J}_l$ that are processed by machines outside $\mathcal{M}_{M(l)}$.
After removing the jobs processed by other machines, we still have that for every $l \in [n]$, at least $1-\gamma$ fraction of the jobs in $\mathcal{J}_{l}$ are processed. Also observe that since we are only deleting some jobs, the makespan of the new scheduling is at most $L$ as well.
Recall that processing each job in $\mathcal{J}_{l}$ takes unit time on the machines in $\mathcal{M}_{M(l)}$.

Using this observation, we obtain a fractional scheduling of $I$ in time $L$ as follows.
For every $l \in [n]$ and $t \in [L]$, define the variable $x_{l,t}$ to be the fraction of the jobs of $\mathcal{J}_l$ that are scheduled by the machines $\mathcal{M}_{M(l)}$ in the time slot $t$. 
By the above discussion, we get the following properties of this fractional scheduling. 
\begin{enumerate}
    \item Every job $l\in [n]$ is almost fully processed. For every $l \in [n]$, we have 
    \[
    \sum_{t=1}^L x_{l,t} \geq 1-\gamma
    \]
    \item Every machine is used only for processing a single unit of job in a time slot. 
    \[
    \sum_{l \in J(i)} x_{l,t} \leq 1\,\,\forall i \in [m],t\in[L].
    \]
    \item If there is a precedence constraint $j_{l_1} \prec j_{l_2}$ in $I$, $l_2$'s processing is done only in the time slots after $l_1$ is fully processed. More formally, 
    \[
    x_{l_1,t} >0 \Rightarrow x_{l_2, t'} = 0 \, \forall t' \leq t
    \]
\end{enumerate}

We will now show that the fractional scheduling implies that the instance $I$ has an integral scheduling with makespan at most $O(L)$, thereby proving the Lemma. 
We will prove this in two steps: first, we modify the fractional scheduling to obtain another fractional scheduling with better structure, and then next, we use this to obtain the integral scheduling. 

For a job $l \in [n]$, define the starting time $t^s_l$ and the end time $t^e_l$ as the minimum and the maximum times at which $l$ is being processed. 
\[
t^s_l = \min\{t : x_{l,t}>0\} \,,\, t^e_l = \max\{t:x_{l,t}>0\}
\]
Note that if we have $j_{l_1}\prec j_{l_2}$, $t^s_{l_2}>t^e_{l_1}$. 
We now modify the fractional scheduling to ensure that each machine processes the job with the lowest ending time first, from the available set of the jobs. 
More formally, for a machine $i \in [m]$, consider the pair of jobs $l_1, l_2 \in J(i)$ and time slot $t \in [L]$ satisfying the following conditions.
\begin{enumerate}
    \item[(C1)] The job $l_1$ has lower ending time: $t^e_{l_1}<t^e_{l_2}$, or $t^e_{l_1}=t^e_{l_2}$ and $l_1 < l_2$. 
    \item[(C2)] The job $l_1$ can be processed on the time slot $t$, but the job $l_2$ is processed instead of finishing the job $l_1$: 
    \[
    t^s_{l_1} \leq t < t^e_{l_1}\,,\, x_{l_2,t}>0
    \]
\end{enumerate}
If there are jobs $l_1, l_2$ and time slot $t$ satifying these conditions, we swap the processing times, and process the job $l_1$ in the time slot $t$ instead of $l_2$. More formally, let $t'>t$ be such that $x_{l_1,t'}>0$. Let $y=\min(x_{l_1,t'}, x_{l_2,t})$. We obtain a new fractional scheduling by setting 
\begin{align*}
x_{l_1, t'} = x_{l_1, t'} - y \, &, \, x_{l_1, t} = x_{l_1,t} + y \\ 
x_{l_2, t'} = x_{l_2, t'} + y \, &, \, x_{l_2, t} = x_{l_2,t} - y 
\end{align*}
Note that the operation does not increase the ending time of either job and does not decrease the starting time of either job and thus, results in a valid fractional scheduling respecting the precedence conditions. 
We repeat the swapping operations until there is no triple $i,j,t$ left where both (C1) and (C2) are true.
We also update the starting and ending times of the jobs $t^s_l$ and $t^e_l$ appropriately when we apply the swapping operations.  

Next, we apply another transformation to the fractional scheduling by filling the empty slots in the machines, if there are any. 
More formally, consider a time slot $t \in [L]$ and job $l \in [n]$ such that the following hold.
\begin{enumerate}
    \item[(D1)] The time slot $t$ is not fully utilized: 
    \[
    \sum_{l' \in J(M(l))} x_{l',t} <1
    \]
    \item[(D2)] The job $l$ can be scheduled on the time slot $t$ instead of leaving the machine idle: $t^s_l \leq t < t^e_l$. 
\end{enumerate}
If there is a time slot $t$ and job $l$ such that the above two conditions hold, we fill the empty slot in the time slot $t$ by processing the job $l$. Let $t'>t$ be such that $x_{l,t'}>0$. 
Let $y = \min ( x_{l,t'}, 1-\sum_{l' \in J(M(l))} x_{l',t})$. 
We set 
\[
x_{l,t} = x_{l,t} + y \,, \, x_{l,t'} = x_{l,t'}-y
\]
We repeat these operations iteratively until no empty slots can be filled. Similar to the previous case, we update the starting and ending times of the jobs appropriately. 

After the two types of operations, we obtain a fractional scheduling with the following property: at every time slot $t$, for a machine $i \in [m]$, let $S_{i,t}$ be the set of jobs that can be scheduled on $i$ in the time slot $t$:
\[
S_{i,t} := \{ l \in J(i) : t^s_l \leq t \leq t^e_l \}
\]
We sort the jobs in $S_{i,t}$ as $\{ l_1, l_2, \ldots, l_k\}$ by increasing order of ending times, and breaking ties based on the index. 
The fractional scheduling greedily schedules the jobs $l_1, l_2, \ldots, $ in that order. More formally, we have 
\[
x_{l_1, t} = \sum_{t'=1}^L x_{l_1,t'} - \sum_{t'=1}^{t-1}x_{l_1,t'}
\]
and 
\[
x_{l_2,t} = \min \Big(1-x_{l_1,t}, \sum_{t'=1}^L x_{l_2,t'} - \sum_{t'=1}^{t-1}x_{l_2,t'}\Big)
\]
and so on. 

Our goal is to show that in this final fractional scheduling that we obtained, each machine schedules at most two jobs in any time slot. In order to prove this, we first define the following parameter, $P_{i,t}$, the amount of jobs \emph{partially} completed in the machine $i$ by the time $t$. 
\[
P_{i,t} = \sum_{l \in J(i): t^e_l>t} \,\, \sum_{ t'=1}^t  x_{l,t'}
\]
We claim that for every $i \in [m], t \in [L]$, we have $P_{i,t} \leq \gamma t$. Fix a machine $i \in [m]$. 
We will prove the claim by induction on $t$. 
\begin{enumerate}
    \item Base case when $t=1$. If no job is processed by the machine $i$ in the time slot $t=1$, the claim is trivially satisfied. 
    Else, let $l_1$ be the job in $J(i)$ with the lowest ending time, breaking ties by the lowest index. Note that the fractional scheduling fully schedules the job $l_1$ in the time slot $t=1$. As each job is processed for at least $1-\gamma$ duration, we get that $P_{i,1}$ is at most $\gamma$. 
    \item Inductive proof. Suppose that the claim holds for all $t' \leq t$ and consider the time slot $t+1$. For ease of notation, let $S = S_{i,t+1}$ be the set of jobs that can be processed on the machine $i$ in the time slot $t+1$.
    If $S$ is empty, the inductive claim trivially holds. Else, let $l \in S$ be the job with the lowest ending time (breaking ties by the least index). Note that the modified fractional scheduling finishes the job $l$ in the time slot $t+1$. Let $x'_{l,t}$ denote the amount of the job $l$ that is processed by time $t$ i.e., 
    $x'_{l,t}= \sum_{t'=1}^t x_{l,t}$. 
    The amount of jobs that are partially finished at the end of time slot $t+1$ is at most 
    \begin{align*}
    P_{i,t+1} &\leq P_{i,t} - x'_{l,t} + (1-x_{l,t+1}) \\ 
    &\leq P_{i,t} + \gamma \leq (t+1)\gamma
    \end{align*}
\end{enumerate}

We will now show that every machine processes at most $2$ jobs in a time slot. Consider a machine $i \in [m]$ and time slot $t\in [L]$. Let $S_{i,t} := \{l_1, l_2, \ldots, l_k\}$.
By the previous claim, we know that at most $(t-1)\gamma$ portion of the job $l_u$ is finished before time $t$, for every $u \in [k]$. Note that $(t-1)\gamma \leq L \gamma \leq \frac{1}{10 n}$. Thus, the greedy fractional scheduling can only schedule at most two jobs, as each of them takes at least $1-\frac{1}{10n}$ time. 
Finally, using this observation, we can duplicate every time slot to obtain an integral scheduling of $I$ with makespan at most $2L$. 
\end{proof}

\smallskip \noindent \textbf{Parameter analysis.} The number of machines in the related machines scheduling instance is $M = \kappa^{O(m)} = n^{O(m)}$, while the hardness gap is $(\log n)^{1-\epsilon'}$ for every $\epsilon'>0$. By setting $\epsilon'$ appropriately, we get a hardness of $(\log M)^{\Omega(1)}$ for the scheduling related machines with precedences problem.

\subsection{Hypothesis of~\cite{BazziN15} implies super-constant hardness of UMPS problem with unit lengths}

Bazzi and Norouzi-Fard~\cite{BazziN15} introduced the following hypothesis and proved that it implies a super-constant hardness of scheduling related machines with precedences. 

\begin{hypothesis}(\cite{BazziN15})
\label{hyp:bazzi-fard}
For every $\epsilon, \delta >0$ and constant integers $k, Q >0$, the following problem is NP-hard. Given a $k$-partite graph $G=(V_1, V_2, \ldots, V_k, E_1, E_2, \ldots, E_{k-1})$ with $|V_i|=n$ for all $1 \leq i \leq k$, and $E_i$ being the set of edges between $V_i$ and $V_{i+1}$ for every $1 \leq i < k$, distinguish between the two cases: 
\begin{enumerate}
    \item (YES case) Every $V_i$ can be partitioned into $V_{i,0}, V_{i,1}, \ldots, V_{i,Q-1}$ such that 
    \begin{itemize}
        \item There is no edge between $V_{i,j_1}$ and $V_{i+1,j_2}$ for all $1 \leq  i < k, j_1 > j_2 \in [Q]$. 
        \item $|V_{i,j}| \geq \frac{(1-\epsilon)}{Q}n$ for all $i \in [k], j \in [Q]$. 
    \end{itemize}
    \item (NO case) For every $1 <i \leq k$, and any two sets $S, T$ with $S \subseteq V_i$, $T \subseteq V_{i-1}$, $|S|=|T|=\delta n$, there is an edge between $S$ and $T$.  
\end{enumerate}

\end{hypothesis}

We now prove that the above hypothesis implies that it is NP-hard to obtain a constant factor approximation algorithm for the UMPS problem, even when all the jobs have unit length. 

\smallskip \noindent \textbf{Reduction.} Given an instance of $k$-partite problem $I$, we output an instance $I'$ of the UMPS problem as follows: there are $n'=nk$ unit sized jobs in $I'$, one job corresponding to each vertex of $G$. There are $k$ machines, and all the jobs in $V_i, i \in [k]$ can only be scheduled on the machine $i$. For every edge $e =(u,v)$ in the graph such that $u \in V_{i}, v \in V_{i+1}$, we have a precedence condition $u \prec v$ in $I'$. We choose the parameter $Q=k$, and $\delta = \epsilon=\frac{1}{k}$.

\smallskip \noindent \textbf{Completeness.} Suppose that the YES case of~\Cref{hyp:bazzi-fard} holds i.e., there is a partition of $V_i$ into $V_{i,0}, V_{i,1}, \ldots, V_{i,Q-1}$ respecting the two conditions above. Then, we claim that there is a scheduling of $I'$ with makespan at most $3n$. 
For every machine $i \in [k]$, we schedule the jobs in $V_{i,0}$ (in arbitrary order), and then the jobs in $V_{i,1}$ (in arbitrary order) and so on. However, we start the execution of the jobs in $V_{i,0}$ at time $t_i$, and then, execute the jobs in $V_{i,l}$ immediately after the execution of the jobs in $V_{i,l-1}$ for all $l\geq 1$. 
The parameters $t_i$, $i \in [k]$ are chosen such that for every pair of jobs $u,v$ with $u \prec v$, $u$ is guaranteed to have scheduled before $v$. In particular, we choose $t_i = (i-1)n\left(\epsilon +\frac{1}{Q}\right)$.

We now prove that this results in a valid scheduling that respects the precedence conditions. 
Consider a pair of jobs $u, v$ such that $u \in V_{i}, v \in V_{i+1}$ such that $u \prec v$. As the $k$-partite graph satisfies the YES condition, we have integers $j_1, j_2$ such that $u \in V_{i,j_1}$ and $v \in V_{i+1,j_2}$, and $j_1 \leq j_2$. 
Note that $u$ is processed by time at most 
\[
t_u = t_i + |V_{i,0}|+|V_{i,1}|+\ldots +|V_{i,j_1}|
\]
Furthermore, $v$ is processed only after time 
\[
t_v = t_{i+1}+|V_{i+1,0}|+|V_{i+1,1}|+\ldots+|V_{i+1,j_2-1}|
\]
We have 
\begin{align*}
    t_v - t_u &= t_{i+1}+|V_{i+1,0}|+|V_{i+1,1}|+\ldots+|V_{i+1,j_2-1}| - (t_i + |V_{i,0}|+|V_{i,1}|+\ldots +|V_{i,j_1}|) \\ 
    &= n\left(\epsilon + \frac{1}{Q}\right) + |V_{i+1,0}|+|V_{i+1,1}|+\ldots+|V_{i+1,j_2-1}| - ( n-|V_{i,j_1+1}|+|V_{i,j_1+2}|+\ldots +|V_{i,Q-1}|) \\ 
    &\geq n\left(\epsilon + \frac{1}{Q}\right) + j_2\frac{(1-\epsilon)n}{Q} - \left( n - \frac{(Q-j_1-1)(1-\epsilon)n}{Q}\right) \\ 
    &\geq n \left( \epsilon + \frac{1}{Q}\right) + j_1\frac{(1-\epsilon)n}{Q} - \left( j_1\frac{(1-\epsilon)n}{Q} + \epsilon n + \frac{(1-\epsilon)n}{Q} \right) \geq 0
\end{align*}
Thus, the schedule is a valid scheduling of $I'$. 
The makespan of this scheduling is at most $t_k+n = (k-1)n\left( \epsilon + \frac{1}{Q} \right) + n \leq 3n$. 

\smallskip \noindent \textbf{Soundness.} Suppose that the NO case of~\Cref{hyp:bazzi-fard} holds. We claim that in this case, the makespan of $I'$ is at least $(1-2\delta)kn$. For every $i \in [k]$, let $s_i$ denote the time at which the machine $i$ has finished $(1-\delta)n$ jobs of $V_i$. For an index $i \in [k]$, let $S(i) \subseteq V_i$ denote the set of jobs that are not processed by the time $s_i$. By the definition of $s_i$, we have $|S(i)|\geq \delta n$. 
By the NO case of~\Cref{hyp:bazzi-fard}, we get that there are at least $(1-\delta)n$ jobs in $V_{i+1}$ that have dependencies in $S(i)$. Note that all these jobs can be scheduled only after $s_i$. 
Thus, we get 
\[
s_{i+1} \geq s_i + (1-2\delta)n  \,\,\forall i \in [k-1]
\]
Summing over all $i$, we get that the makespan of the scheduling is at least $(1-2\delta)kn$, which is at least $\frac{kn}{2}$ when $k \geq 4$. By choosing $k$ large enough, this completes the proof that assuming~\Cref{hyp:bazzi-fard}, it is NP-hard to obtain a $O(1)$ factor approximation algorithm for the UMPS problem when the jobs have unit lengths.

\bibliography{ref}
\bibliographystyle{alpha}
\end{document}